\theoremstyle{plain}
\newtheorem{thm}{Theorem}
\newtheorem{prop}[thm]{Proposition}
\newtheorem{lem}[thm]{Lemma}
\newtheorem{cor}[thm]{Corollary}
\theoremstyle{definition}
\newtheorem{defn}[thm]{Definition}
\theoremstyle{remark}
\newtheorem{rmk}[thm]{Remark}
\newtheorem{exa}[thm]{Example}
\newcommand*\set[2]{\ensuremath{\left\{\; #1 \;\middle|\; #2 \;\right\}}}
\newcommand*\Real{\ensuremath{\mathbb{R}}}
\newcommand*\Rational{\ensuremath{\mathbb{Q}}}
\newcommand*\D{\@ifstar\D@Star\D@noStar}
\newcommand*\D@noStar[1][]{\ensuremath{~\mathrm{d}#1}}
\newcommand*\D@Star[1][]{\ensuremath{\mathrm{d}#1}}
\newcommand*\eqdef{\ensuremath{\overset{\mbox{\tiny{def}}}{=}}}
\newcommand*{\Rmnum}[1]{\expandafter\@slowromancap\romannumeral #1@}
\newcommand*\dirlim\varinjlim
\newcommand*\invlim\varprojlim
\begin{document}
\title{A comment on the construction of the maximal globally
hyperbolic Cauchy development}
\author{Willie Wai-Yeung Wong}
\affiliation{\'{E}cole Polytechnique F\'{e}d\'{e}rale de Lausanne,
Switzerland}
\email{willie.wong@epfl.ch}

\begin{abstract}
Under mild assumptions, we remove all traces of the axiom of choice
from the construction of the maximal globally hyperbolic Cauchy
development in general relativity. The construction relies on the
notion of direct union manifolds, which we review. The construction
given is very general: any physical theory with a suitable geometric
representation (in particular all classical fields), and such that a
strong notion of ``local existence and uniqueness'' of solutions for
the corresponding initial value problem is available, is amenable to
the same treatment. 
\end{abstract}

\maketitle

A celebrated theorem on the local Cauchy problem for Einstein's
equations is that of \citet{ChoGer1969} 
which asserts that
every initial data set leads to a unique \emph{maximal globally
hyperbolic Cauchy development}. In their original proof
(as well as many subsequent treatments, see
e.g.~\cite{Ringst2009}) the authors appealed to Zorn's Lemma in their
construction of the space-time manifold, which led to the common
misconception that the proof is non-constructive as the argument
seemingly depends on the axiom of choice. 

In this paper, we will show that, insofar as the actual
\emph{construction} of the space-time manifold is concerned, the use
of axiom of choice is not necessary. As it turns out, however, the
manifold constructed will not, in general, be second countable, making
geometry and analysis somewhat awkward on the space-time. One can
circumvent this difficulty in two ways: firstly in many situations assuming
the axiom of \emph{countable choice} (or even weaker statements
\citep{HowRub1998} such
as ``every countable union of countable sets is countable'') can allow
us to recover statements about countability of a basis for the
topology; secondly, an option that the author hopes to emphasize here,
is that sometimes adding some \emph{additional structures} (in a
manner that is natural and physical) to the definition of a
space-time will allow us to sidestep the issue of choice  entirely. 

Recently the same question, in the context of general relativity, has been 
treated exhaustively in a pre-print by \citet{Sbi2013}. Our approach
here offers two minor improvements:
\begin{enumerate}
\item We were able to avoid the axiom of choice entirely,
\emph{including the axiom of countable choice}. In Sbierski's
construction he appealed to a theorem of
\citet[Appendix]{Geroch1968} to obtain second countability of the
space-time. This theorem depends on the statement ``every countable
union of countable sets is countable'' alluded to above, which cannot
be proven in $\mathsf{ZF}$ (that is, Zermelo-Fraenkel set theory
without axiom of choice) alone. 
\item We isolate the structures which allows the general construction
to proceed. In particular, we shall take as \emph{black boxes} certain
facts about the local existence and uniqueness theorems (see, e.g.
\citet{ChoGer1969}, the monograph of 
\citet{Ringst2009}, as well as \citet{Sbi2013} for a discussion), and
concentrate only at the level of the construction of the maximal
development. This allows us to easily ``swap out'' the underlying
physical model with anything else that satisfies sufficiently strong
local existence and uniqueness theorems. 
\end{enumerate}

One may ask why bother at all about the issue of choice (and its
weaker formulations): after all, much of the foundations of topology,
geometry, and analysis that come up in the study of partial
differential equations on manifolds (a subject within which the
evolution problem of general relativity squarely sits) as commonly
used depend on some (perhaps weakened) version of the axiom of choice.
Here is a sampling of the statements that one may find useful but
cannot be proved (in $\mathsf{ZF}$) without some form of
choice (the numbers in parentheses refer to the searchable 
form numbers from the companion
website \url{http://consequences.emich.edu/conseq.htm} to
\citet{HowRub1998}):
\begin{itemize}
\item In functional analysis: Hahn-Banach theorem (\#52), Krein-Milman theorem (\#65), the Banach-Alaoglu theorem
(\#14Q), and the Arzel\`a-Ascoli theorem (\#94Q).
\item In analysis of metric spaces: the fact that on a metric space
sequential continuity implies continuity (\#8E), the Heine-Borel
theorem for $\Real^n$ (\#74), and that every uncountable subset
of $\Real$ contains a condensation point (\#6A).
\item In topology: that a second countable topological space is
separable (\#8L) and Urysohn's Lemma (\#78). 
\end{itemize}
One answer to that question ``why'' is one of aesthetics. (For 
some other points-of-view, the
author encourages the reader to look at the MathOverflow discussion
accessible at
\url{http://mathoverflow.net/questions/22927/}.) An insistence on using some
versions of axiom of choice when it can be easily avoided seems rather
wasteful. Furthermore, the general tool that was
actually used to construct the maximum globally hyperbolic Cauchy
development can be easily adapted in many categories other than just
smooth manifolds. One need not be in a context
where some version of axiom of choice is already used or needed to find
the construction useful.  Lastly, should
(however unlikely) the prevailing opinion on the
axiom of choice change in the future, a better understanding of a
physical theory's dependence on said axiom will certainly help
reformulate the theory upon more culturally acceptable foundations.

\section{The direct union construction}\label{sec:directunion}
We are motivated by the following: let $M$ be a smooth manifold and
$\mathcal{U}$ a collection of \emph{open} submanifolds of $M$ (in
other words, open sets on $M$), their union $\cup \mathcal{U}$ is
another open submanifold of $M$. With great hindsight, we see that in
the case of general relativity, we can take $M$ to be the maximal
globally hyperbolic Cauchy development, and $\mathcal{U}$ the
collection of Cauchy developments. Then \emph{morally speaking} we
should be able to obtain the maximal development as the union of the
elements of $\mathcal{U}$. The concept which allows us to consider the
union of a family of objects which do not, \textit{a priori}, exist as
a subset of the same set is the notion of \emph{direct
union}. Here we give a brief review. 

Recall that a directed set $(I,\prec)$ is a preorder such that every
pair of elements have an upper-bound. In the sequel a smooth manifold
refers to a topological manifold equipped with a $C^\infty$ structure.
We do \emph{not} assume the smooth manifold to be either Hausdorff or
second countable. 
\begin{defn}
Let $(I,\prec)$ be a directed set. A \emph{direct system of smooth
manifolds} over $(I,\prec)$ is the pair $(\mathfrak{M},\mathfrak{F})$
where $\mathfrak{M} = \{M_i\}_{i\in I}$ is a set of smooth
manifolds, and 
\[ \mathfrak{F} = \{f_{ji}\}_{i\prec j; i,j\in I}~,\qquad
\text{where}~ f_{ji}: M_i
\to M_j \]
is smooth, satisfying the condition that whenever $i\prec j \prec k$, 
\[ f_{ki} = f_{kj}\circ f_{ji}~.\]
\end{defn}

\begin{defn}
A direct system of smooth manifolds $(\mathfrak{M},\mathfrak{F})$ is
said to be \emph{regular} if all the maps $f_{ji}$ are open, and are
diffeomorphisms of $M_i$ onto their image. 
\end{defn}

Now, suppose $M,N$ are smooth manifolds, and $f:M\to N$ is an open
smooth map and a diffeomorphism onto $f(M)$. If $(U,\phi)$ is a 
chart of $M$, then by definition $(f(U),\phi\circ f^{-1}|_{f(U)})$ is 
a compatible chart with the atlas of $N$. This is to say.

\begin{lem}
Let $(\mathfrak{M},\mathfrak{F})$ be a regular direct system of smooth
manifolds. Let $\mathcal{A}_i$ be the maximal atlas (see Remark
\ref{rmk:atlas} below) for $M_i$, then the 
\emph{pushforward} 
$f_{ji}(\mathcal{A}_i)$ is
well-defined, and $f_{ji}(\mathcal{A}_i) \subseteq \mathcal{A}_j$. 
\end{lem}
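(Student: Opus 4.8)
The plan is to reduce the statement to the single-chart computation already recorded in the paragraph preceding the lemma, and then to invoke maximality of $\mathcal{A}_j$. Recall from there that for a chart $(U,\phi)\in\mathcal{A}_i$ the natural candidate for its pushforward is the pair
\[
f_{ji}(U,\phi) \eqdef \bigl(f_{ji}(U),\ \phi\circ(f_{ji}|_U)^{-1}\bigr),
\]
and I would \emph{define} $f_{ji}(\mathcal{A}_i)$ to be the collection of all such pairs as $(U,\phi)$ ranges over $\mathcal{A}_i$. The point worth flagging, given the concerns of the paper, is that this assignment is completely deterministic: by Replacement the image is a set, and no appeal to choice enters. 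The content of the word ``well-defined'' is then simply that each such pair is a genuine chart of $M_j$.

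First I would verify that each pushforward pair is a chart, and here both halves of the regularity hypothesis are used exactly once. Because $f_{ji}$ is \emph{open}, the set $f_{ji}(U)$ is open in $M_j$. Because $f_{ji}$ is a \emph{diffeomorphism onto its image}, the restriction $f_{ji}|_U$ is a homeomorphism onto $f_{ji}(U)$, so $(f_{ji}|_U)^{-1}$ is defined and continuous, and the composite $\phi\circ(f_{ji}|_U)^{-1}\colon f_{ji}(U)\to\phi(U)\subseteq\Real^n$ is a homeomorphism onto the open set $\phi(U)$. Thus each pushforward pair is a topological chart on $M_j$, and $f_{ji}(\mathcal{A}_i)$ is well-defined as a set of charts.

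Next I would establish the containment $f_{ji}(\mathcal{A}_i)\subseteq\mathcal{A}_j$. Since $\mathcal{A}_j$ is the \emph{maximal} atlas (Remark \ref{rmk:atlas}), it contains every chart smoothly compatible with it, so it suffices to check that an arbitrary pushforward chart is smoothly compatible with every chart $(V,\psi)\in\mathcal{A}_j$. On the overlap $f_{ji}(U)\cap V$ the two transition maps are
\[
\psi\circ(f_{ji}|_U)\circ\phi^{-1} \qquad\text{and}\qquad \phi\circ(f_{ji}|_U)^{-1}\circ\psi^{-1},
\]
which are precisely the local coordinate representations, in the charts $\phi$ and $\psi$, of the smooth map $f_{ji}$ and of its smooth local inverse $(f_{ji}|_U)^{-1}$ respectively. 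Both are therefore smooth between open subsets of $\Real^n$, which gives compatibility; maximality of $\mathcal{A}_j$ then yields the inclusion.

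I do not expect a genuine obstacle here: the whole statement is essentially a careful transcription of the fact that a diffeomorphism onto an open image carries charts to compatible charts. The only points demanding care are bookkeeping ones --- tracking the domains of the transition maps through the intersection $f_{ji}(U)\cap V$, and resisting the temptation to ask for a \emph{global} inverse of $f_{ji}$, since regularity supplies only the local inverse $(f_{ji}|_U)^{-1}$, which is all that the argument uses. The conceptual work has already been done in the sentence immediately preceding the lemma; the proof merely packages it uniformly over the entire atlas.
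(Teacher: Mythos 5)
Your proposal is correct and follows exactly the paper's route: the paper's entire proof is the sentence immediately preceding the lemma (an open smooth map that is a diffeomorphism onto its image carries a chart $(U,\phi)$ to the compatible chart $(f(U),\phi\circ f^{-1}|_{f(U)})$), and your argument is that same observation carried out with the compatibility check and the appeal to maximality of $\mathcal{A}_j$ made explicit. No gaps; you have simply written out the details the paper leaves implicit.
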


\begin{rmk}\label{rmk:atlas}
In the spirit of the present paper, we remark that
the \emph{maximal atlas} is just the union over the set of all 
atlases compatible
to a given one, and its existence does not require Zorn's Lemma; there 
seems to be a lot of confusion in the literature
regarding this point (see for example \citet{Schwar2011,Mirand1995};
compare with the simpler treatment on pg.2 of \citet{KobNom1996}).
\end{rmk}

Now let us be given such a regular direct system of smooth manifolds.
We denote by $\dirlim_{\text{Top}} \mathfrak{M}$ its direct limit as a
topological space. That is to say, as a set we take
\[ \dirlim_{\text{Set}} \mathfrak{M} = \coprod_i M_i / \sim \]
where the equivalence relation is $x_i \sim x_j$ iff there exists $k
\succ i,j$ such that $f_{ki}(x_i) = f_{kj}(x_j)$;  we then give it
the quotient topology induced from $\coprod_i M_i$. We let $f_{*i}:M_i
\to \dirlim_{\text{Top}} \mathfrak{M}$ the natural mapping, which we
remark has the property that if $i\prec j$, $f_{*i} = f_{*j}\circ
f_{ji}$. By the definition of the
quotient topology we have that $f_{*i}$ is continuous. From the
definition of the equivalence relation, and the assumption that
$f_{ji}$ are injective, we also have that $f_{*i}$ is injective. 
We claim that $f_{*i}$ is also open. Indeed, let $U_i\subseteq M_i$ 
be an open set.
If $j\in I$ then since it is directed there exists $k \succ i,j$. By
definition $f_{*j}^{-1}\circ f_{*i}(U_i) = f_{kj}^{-1}\circ
f_{ki}(U_i)$, which is open since $f_{ki}$ is open and $f_{kj}$
is continuous. 

To finish the construction we need to give a smooth structure to
$\dirlim_{\text{Top}} \mathfrak{M}$. Consider the charts
$(f_{*i}(U_i), \phi \circ f_{*i}^{-1})$ (which is well-defined since
$f_{*i}$ is injective) where $(U_i,\phi) \in \mathcal{A}_i$. Clearly
the collection of all such charts cover $\dirlim_{\text{Top}}
\mathfrak{M}$. It suffices to show that they are pairwise compatible.
But if $(U_i,\phi)\in \mathcal{A}_i$ and $(V_j,\psi)\in \mathcal{A}_j$
are two charts, by assumption we can find $k\succ i,j$ such that
$(f_{ki}(U_i),\phi\circ f_{ki}^{-1}), (f_{kj}(V_j), \psi\circ
f_{kj}^{-1}) \in \mathcal{A}_k$. Using now that $f_{*k}$ is injective
and open, we conclude that $(f_{*i}(U_i),\phi\circ f_{*i}^{-1})$ and
$(f_{*j}(V_j),\psi\circ f_{*j}^{-1})$ are compatible. 

\begin{defn}
We denote by $\dirlim \mathfrak{M}$ the  
topological space $\dirlim_{\text{Top}} \mathfrak{M}$ equipped with
the atlas described above. We call it the \emph{direct union} of our
regular direct system of smooth manifolds. 
\end{defn}

\begin{rmk}
None of the operations involved in the construction above requires any
notion of choice.
\end{rmk}

From the considerations above we see that $f_{*i}$ is continuous,
open, and injective. So it is a homeomorphism onto its image.
Furthermore, it is by our choice of smooth structure smooth. Therefore
we have that

\begin{prop}
The mappings $f_{*i}: M_i \to \dirlim \mathfrak{M}$ are open and
diffeomorphic onto their image. 
\end{prop}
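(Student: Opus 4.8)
The plan is to assemble the proposition as essentially a bookkeeping corollary of the construction just carried out, rather than to prove anything substantially new. First I would recall the three topological facts already established in the preceding paragraphs: $f_{*i}$ is continuous (from the definition of the quotient topology), injective (from the definition of $\sim$ together with the injectivity of the $f_{ji}$), and open (via the identity $f_{*j}^{-1}\circ f_{*i}(U_i) = f_{kj}^{-1}\circ f_{ki}(U_i)$, where $k\succ i,j$). A map that is continuous, injective, and open is automatically a homeomorphism onto its image: injectivity makes the corestriction $f_{*i}\colon M_i \to f_{*i}(M_i)$ a bijection, and the openness of $f_{*i}$ is precisely the statement that the inverse of this corestriction is continuous. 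Thus the topological half of the claim is immediate, and openness of $f_{*i}$ as a map into $\dirlim\mathfrak{M}$ is one of the three inputs.

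It then remains only to upgrade this homeomorphism to a diffeomorphism, i.e.\ to verify that both $f_{*i}$ and its inverse onto $f_{*i}(M_i)$ are smooth with respect to the atlas placed on $\dirlim\mathfrak{M}$. I would check this directly in coordinates. Given any chart $(U_i,\phi)\in\mathcal{A}_i$, the construction furnishes the chart $(f_{*i}(U_i),\phi\circ f_{*i}^{-1})$ on the direct union, where $f_{*i}^{-1}$ is well-defined on $f_{*i}(U_i)$ precisely because $f_{*i}$ is injective. In this paired set of charts, the coordinate representation of $f_{*i}$ is $(\phi\circ f_{*i}^{-1})\circ f_{*i}\circ \phi^{-1} = \mathrm{id}$ on $\phi(U_i)$, and the representation of the inverse is likewise the identity; both are trivially smooth. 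Since such charts cover $M_i$ and $f_{*i}(M_i)$ respectively, smoothness in both directions follows, and the homeomorphism is a diffeomorphism onto its image.

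I do not expect a genuine obstacle here, since the real work was front-loaded: establishing openness, and showing that the pushforward charts are pairwise compatible, is exactly what guarantees that the target atlas is the one making $f_{*i}$ a diffeomorphism rather than merely a smooth bijection onto its image. If anything, the single point deserving care is that the charts $(f_{*i}(U_i),\phi\circ f_{*i}^{-1})$ genuinely belong to the constructed atlas on $\dirlim\mathfrak{M}$, so that the coordinate computation above is legitimate; but this is precisely the compatibility already verified in the construction. The proposition therefore reduces to collecting these observations.
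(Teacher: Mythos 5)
Your proposal is correct and follows the paper's own argument exactly: the paper likewise treats the proposition as a summary of the preceding construction, noting that continuity, injectivity, and openness of $f_{*i}$ give a homeomorphism onto the image, and that smoothness is automatic ``by our choice of smooth structure.'' Your coordinate computation showing that $f_{*i}$ is the identity in the pushforward charts simply makes explicit what the paper leaves implicit in that phrase.
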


\section{Direct limit maps}
Now suppose $(\mathfrak{M},\mathfrak{F})$ and
$(\mathfrak{N},\mathfrak{G})$ are two regular direct systems of smooth
manifolds indexed by the same direct set $(I,\prec)$. Suppose
furthermore that there is a set $\mathfrak{H} = \{h_i\}_{i\in I}$
where $h_i: M_i \to N_i$ are smooth maps such that whenever $i\prec j$
\begin{equation}\label{eq:hcompatibility} h_j \circ f_{ji} = g_{ji} \circ h_i~.\end{equation}

\begin{prop}\label{prop:limitmap}
There exists a smooth map $h_*: \dirlim \mathfrak{M} \to \dirlim
\mathfrak{N}$ such that for every $i$, 
\[ h_* \circ f_{*i} = g_{*i} \circ h_i~.\]
\end{prop}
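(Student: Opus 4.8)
The plan is to build $h_*$ by descent from the disjoint union, so that at no stage must a representative be selected; this keeps the construction free of choice, in keeping with the rest of the paper. The key structural fact I would exploit is that $\dirlim_{\text{Top}}\mathfrak{M}$ carries the quotient topology induced from $\coprod_i M_i$, so that a map out of it is continuous exactly when its precomposition with the quotient map $\pi:\coprod_i M_i \to \dirlim\mathfrak{M}$ is. First I would assemble the family $\{g_{*i}\circ h_i\}_{i\in I}$ into a single map $\tilde h:\coprod_i M_i \to \dirlim\mathfrak{N}$, defined on the summand $M_i$ by $\tilde h|_{M_i}=g_{*i}\circ h_i$. Each $g_{*i}$ is continuous (by the construction of Section~\ref{sec:directunion} applied to the regular system $\mathfrak{N}$) and each $h_i$ is smooth, so $\tilde h$ is continuous on $\coprod_i M_i$.

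The crux is to verify that $\tilde h$ is constant on the equivalence classes defining $\dirlim_{\text{Set}}\mathfrak{M}$, hence descends to the quotient. So suppose $x_i\sim x_j$, i.e.\ there is some $k\succ i,j$ with $f_{ki}(x_i)=f_{kj}(x_j)$. Applying $h_k$ to both sides and invoking the compatibility hypothesis \eqref{eq:hcompatibility} for the pairs $i\prec k$ and $j\prec k$ yields
\[ g_{ki}\bigl(h_i(x_i)\bigr)=h_k\bigl(f_{ki}(x_i)\bigr)=h_k\bigl(f_{kj}(x_j)\bigr)=g_{kj}\bigl(h_j(x_j)\bigr). \]
This exhibits $h_i(x_i)\sim h_j(x_j)$ in $\dirlim_{\text{Set}}\mathfrak{N}$ through the same index $k$, whence $g_{*i}(h_i(x_i))=g_{*j}(h_j(x_j))$, i.e.\ $\tilde h(x_i)=\tilde h(x_j)$. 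Consequently $\tilde h$ factors uniquely as $\tilde h=h_*\circ\pi$ for a set map $h_*:\dirlim\mathfrak{M}\to\dirlim\mathfrak{N}$, and this factorization is precisely the required relation $h_*\circ f_{*i}=g_{*i}\circ h_i$. Continuity of $h_*$ is then immediate from the quotient property noted above.

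Finally I would check smoothness chart-by-chart, using that each $f_{*i}$ and $g_{*i}$ is an open diffeomorphism onto its image (the concluding Proposition of Section~\ref{sec:directunion}). Given $p=f_{*i}(x_i)$, I would take a chart $(U,\phi)$ of $M_i$ at $x_i$ and a chart $(V,\psi)$ of $N_i$ at $h_i(x_i)$ with $h_i(U)\subseteq V$ (shrinking $U$ if necessary), giving charts $(f_{*i}(U),\phi\circ f_{*i}^{-1})$ about $p$ and $(g_{*i}(V),\psi\circ g_{*i}^{-1})$ about $h_*(p)=g_{*i}(h_i(x_i))$. Using $h_*\circ f_{*i}=g_{*i}\circ h_i$, the local representation of $h_*$ in these charts reduces to $\psi\circ h_i\circ\phi^{-1}$, the local representation of the smooth map $h_i$; hence $h_*$ is smooth near $p$, and, $p$ being arbitrary, smooth.

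I expect the only genuinely subtle step to be the well-definedness computation of the second paragraph, which is exactly where the compatibility hypothesis \eqref{eq:hcompatibility} is used; continuity and smoothness then follow formally from the quotient construction and from the fact, already in hand, that the canonical maps are open diffeomorphisms onto their images. Notably, defining $h_*$ by descent from $\coprod_i M_i$ rather than on chosen representatives means no instance of choice enters the argument.
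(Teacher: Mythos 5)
Your proposal is correct and follows essentially the same route as the paper: well-definedness of $h_*$ on the quotient via the compatibility relation \eqref{eq:hcompatibility}, followed by a chart computation showing the local representation of $h_*$ reduces to that of $h_i$ (the paper checks this against pushforward charts from arbitrary indices $i,j$ via a common upper bound $k$, whereas you use charts pushed forward from the same index $i$, a harmless simplification). Your descent formulation via the quotient map and your explicit verification that $h_i(x_i)\sim h_j(x_j)$ merely spell out details the paper leaves implicit.
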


\begin{proof}
It suffices to check that $h_*$ is well-defined; for this we only need
to check that if $x_i\sim x_j$ in $\coprod_i M_i$ that $h_i(x_i) \sim
h_j(x_j)$ in $\coprod_j N_j$. But this follows from
\eqref{eq:hcompatibility}. That $h_*$ is a smooth map follows from the
smoothness of $h_i$ and the fact that if $(U_i,\phi)$ is a chart for
$M_i$ and $(V_j,\psi)$ for $N_j$
\[ \psi\circ g_{*j}^{-1} \circ h_* \circ f_{*i} \circ \phi^{-1} = \psi
\circ g_{kj}^{-1} \circ h_k \circ f_{kj} \circ \phi^{-1} \]
for any $k\succ i,j$ over any open set where all the operations are
defined, and that an atlas for the direct union manifold is given by
the collection of all pushforward charts. 
\end{proof}

A direct consequence of the above construction is that we can take the
direct union of a regular direct system of fibred manifolds (under the
obvious definition); similarly, if this system is equipped with smooth
sections that obey an appropriate commutation relation of the form
\eqref{eq:hcompatibility}, we can extend this section to a section
over the direct union of the base manifolds. In particular, noting
that a pseudo-Riemannian metric on a smooth manifold $M$ 
is a section (always smooth; see Remark \ref{rmk:smoothness}) of the vector bundle $T^{0,2}M$, we have

\begin{cor}
Let $(\mathfrak{M},\mathfrak{F})$ and $(\mathfrak{N},\mathfrak{G})$ be
two regular direct systems of smooth manifolds over the same direct
set $(I,\prec)$, and assume that $M_i$
and $N_i$ are equipped with pseudo-Riemannian metrics such that the
mappings $f_{ji}$ and $g_{ji}$ are isometries onto their image. If 
furthermore we have
a set $\mathfrak{H} = \{h_i\}_{i\in I}$ of isometries $h_i: M_i \to
N_i$, such that \eqref{eq:hcompatibility} is satisfied. Then
\begin{enumerate}
\item $\dirlim\mathfrak{M}$ and $\dirlim\mathfrak{N}$ can be equipped
with pseudo-Riemannian metrics such that the mappings $f_{*i}$ and
$g_{*i}$ are isometries. 
\item The smooth map $h_*$ of Proposition \ref{prop:limitmap} is an
isometry. 
\end{enumerate}
\end{cor}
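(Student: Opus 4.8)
The plan is to obtain both conclusions from the section extension principle recorded just above the statement: a pseudo-Riemannian metric on $M_i$ is a smooth section $\gamma_i$ of $T^{0,2}M_i$, and the hypothesis that each $f_{ji}$ is an isometry onto its image says exactly that the family $\{\gamma_i\}$ is compatible with the bundle maps induced on the $(0,2)$-tensor bundles by the differentials of the $f_{ji}$, i.e.\ it obeys a commutation relation of the form \eqref{eq:hcompatibility}. Rather than phrase matters through the tensor bundles, however, I would build the limit metric directly by gluing, which keeps the choice-free character manifest. For each $i$ set $\gamma_*^{(i)} \eqdef (f_{*i}^{-1})^*\gamma_i$, the metric obtained by transporting $\gamma_i$ along the diffeomorphism $f_{*i}\colon M_i \to f_{*i}(M_i)$; this is legitimate because, as shown above, $f_{*i}$ is an open diffeomorphism onto its image. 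Since the open sets $f_{*i}(M_i)$ cover $\dirlim\mathfrak{M}$, it then suffices to check that the $\gamma_*^{(i)}$ agree on overlaps, whereupon they glue to a single smooth metric $\gamma_*$ for which each $f_{*i}$ is tautologically an isometry.

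The content of the argument is this overlap check, and it rests entirely on directedness together with the cocycle relation $f_{*i}=f_{*k}\circ f_{ki}$. Given $p\in f_{*i}(M_i)\cap f_{*j}(M_j)$, I would pick representatives $x_i,x_j$ with $f_{*i}(x_i)=p=f_{*j}(x_j)$ and use the definition of the equivalence relation to find $k\succ i,j$ with $f_{ki}(x_i)=f_{kj}(x_j)$, so that $p\in f_{*k}(M_k)$ as well. Because $f_{ki}$ is an isometry onto its image we have $(f_{ki}^{-1})^*\gamma_i=\gamma_k$ on $f_{ki}(M_i)$, and therefore on a neighbourhood of $p$
\[ \gamma_*^{(i)} = (f_{*i}^{-1})^*\gamma_i = (f_{*k}^{-1})^*(f_{ki}^{-1})^*\gamma_i = (f_{*k}^{-1})^*\gamma_k = \gamma_*^{(k)}, \]
and symmetrically $\gamma_*^{(j)}=\gamma_*^{(k)}$ there; hence $\gamma_*^{(i)}=\gamma_*^{(j)}$ near $p$. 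The essential point is that the common value at $p$ is the same no matter which index is used, so $\gamma_*(p)$ is fixed without selecting anything, and smoothness is inherited locally from each $\gamma_*^{(i)}$. The identical construction applied to $(\mathfrak{N},\mathfrak{G})$ yields $\gamma_*'$ with every $g_{*i}$ an isometry, which settles part~(1).

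For part~(2) I would verify the isometry identity $h_*^*\gamma_*'=\gamma_*$ locally, again exploiting that the $f_{*i}$ are diffeomorphisms onto an open cover of $\dirlim\mathfrak{M}$, so it is enough to compare pullbacks along each $f_{*i}$. Using the defining relation $h_*\circ f_{*i}=g_{*i}\circ h_i$ of Proposition~\ref{prop:limitmap} one finds over $M_i$
\[ f_{*i}^*h_*^*\gamma_*' = (h_*\circ f_{*i})^*\gamma_*' = (g_{*i}\circ h_i)^*\gamma_*' = h_i^*\,g_{*i}^*\gamma_*' = h_i^*\gamma_i' = \gamma_i = f_{*i}^*\gamma_*, \]
where the successive equalities use that $g_{*i}$ is an isometry (so $g_{*i}^*\gamma_*'=\gamma_i'$), that $h_i$ is an isometry, and that $f_{*i}$ is an isometry. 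Cancelling the injective pullback $f_{*i}^*$ gives $h_*^*\gamma_*'=\gamma_*$ on each $f_{*i}(M_i)$, hence everywhere.

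I expect no real obstacle here: part~(2) is a formal diagram chase, and the only step with any substance is the overlap identity in part~(1), whose whole force is the ability to descend to a common index $k\succ i,j$. The one place deserving care---and my reason for preferring the explicit gluing over the tensor-bundle formulation---is that the bundle version would still require one to identify the direct union of the $T^{0,2}M_i$ with $T^{0,2}(\dirlim\mathfrak{M})$; the pointwise gluing bypasses this identification and, as stressed, never appeals to any form of choice.
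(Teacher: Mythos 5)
Your proof is correct, but it takes a genuinely different route from the paper's. The paper derives the corollary as an instance of a general principle about fibred manifolds: the bundles $T^{0,2}M_i$, with the maps induced by the $f_{ji}$, form a regular direct system of fibred manifolds whose direct union is (implicitly) identified with $T^{0,2}(\dirlim\mathfrak{M})$, and the metrics, viewed as sections obeying a commutation relation of the form \eqref{eq:hcompatibility} --- which is precisely the hypothesis that the $f_{ji}$ are isometries onto their images --- are extended to a limit section by the mechanism of Proposition \ref{prop:limitmap}; part (2) then follows from the same proposition applied to the $h_i$. Your construction replaces this structural argument by a pointwise gluing over the open cover $\{f_{*i}(M_i)\}$: the overlap identity via a common index $k\succ i,j$ is sound (indeed your computation shows $\gamma_*^{(i)}=\gamma_*^{(k)}$ on all of $f_{*i}(M_i)$, not merely near $p$), the glued tensor is smooth and nondegenerate of the right signature because it is locally a pullback along a diffeomorphism, and the cancellation of $f_{*i}^*$ in part (2) is legitimate because $f_{*i}$ is a diffeomorphism onto an open set. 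What your approach buys is exactly what you point out: one never needs to make sense of the identification $\dirlim T^{0,2}M_i \cong T^{0,2}(\dirlim\mathfrak{M})$, which the paper glosses over with the phrase ``under the obvious definition''. What it gives up is generality: the paper's formulation handles in one stroke arbitrary compatible sections of an arbitrary regular direct system of fibred manifolds (which is what the surrounding discussion is really after), whereas your gluing exploits the specific fact that a metric is a covariant tensor field and can therefore be pulled back; the analogous gluing for sections of a general fibred system works but would need to be phrased via transport of fibre values rather than pullback. Both arguments are equally free of any appeal to choice.
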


\begin{rmk}\label{rmk:smoothness}
In terms of applications to physics (see Section
\ref{sec:app}), We only consider the case of
\emph{smooth} solutions to the Cauchy problem. Most of the statements
here carry over exactly when the relevant structures are of class
$C^k$ which are suitably compatible under compositions. For some of
the difficulties involved when considering Sobolev-class structures
and in developing the \emph{black box} local existence and uniqueness
results (cf.\ Section \ref{sec:app}) in lower (but still classical) 
regularity, see
\cite{ChruscP2011}.
\end{rmk}

\section{Separation and countability}
For reasons of analysis, it is usually convenient to work with smooth
manifolds that are Hausdorff and second countable. Let us first
consider the separation axioms. We have the following
\begin{lem}
Let $(\mathfrak{M},\mathfrak{F})$ be a regular direct system of smooth
manifolds. Let $x_i, y_i \in M_i$ and let $U_i \subseteq M_i$ be open,
such that $x_i \in U_i$ and $y_i \not\in U_i$. Then $f_{*i}(x_i) \in
f_{*i}(U_i)$, $f_{*i}(y_i) \not\in f_{*i}(U_i)$, and $f_{*i}(U_i)$ is
open. 
\end{lem}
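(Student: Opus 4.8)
The plan is to deduce all three assertions directly from structure already in hand for the natural map $f_{*i}$, rather than to argue anything afresh. Recall that we have shown $f_{*i}\colon M_i \to \dirlim\mathfrak{M}$ to be injective, open, and continuous — in fact a smooth homeomorphism onto its image, as recorded in the Proposition at the close of Section~\ref{sec:directunion}. I would simply invoke these three properties, one for each clause of the statement, with no further construction required.

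First, the membership $f_{*i}(x_i) \in f_{*i}(U_i)$ is immediate from $x_i \in U_i$: it is nothing more than the statement that a function carries an element of a set into the image of that set. Second, that $f_{*i}(U_i)$ is open is precisely the openness of $f_{*i}$ applied to the open set $U_i \subseteq M_i$. Neither of these clauses uses anything beyond the fact that $f_{*i}$ is an open map.

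The only clause carrying genuine (if still minor) content is $f_{*i}(y_i) \notin f_{*i}(U_i)$, where the operative ingredient is injectivity. Here I would argue by contradiction: if $f_{*i}(y_i)$ were an element of $f_{*i}(U_i)$, there would exist some $z \in U_i$ with $f_{*i}(y_i) = f_{*i}(z)$; injectivity of $f_{*i}$ then forces $y_i = z \in U_i$, contradicting the hypothesis $y_i \notin U_i$.

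I do not anticipate any real obstacle, since the lemma is a bookkeeping consequence of $f_{*i}$ being an open topological embedding. The one point worth flagging is provenance: injectivity of $f_{*i}$ was obtained earlier from the injectivity of the bonding maps $f_{ji}$ together with the definition of the equivalence relation on $\coprod_i M_i$. It is this injectivity, and emphatically \emph{not} any separation property of the individual $M_i$, that is responsible for separating $f_{*i}(y_i)$ from $f_{*i}(U_i)$ inside the direct union — a distinction that will matter when this lemma is used to address the separation axioms for $\dirlim\mathfrak{M}$.
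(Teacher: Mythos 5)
Your proposal is correct and follows exactly the paper's own argument: the paper's proof consists of the single line that, by the construction in Section~\ref{sec:directunion}, $f_{*i}$ is open and injective, from which the lemma follows. You have merely spelled out the routine details (image membership, openness of the image, and the injectivity argument for the exclusion clause) that the paper leaves implicit.
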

\begin{proof}
By the construction in Section \ref{sec:directunion}, $f_{*i}$ is open
and injective; the lemma follows. 
\end{proof}

Now recall some of the separation axioms. A topological space is said to be
\begin{itemize}
\item $\mathsf{T}_0$ (Kolmogorov) if given any $x\neq y$, there exists an open set $U$
such that exactly one of $x,y$ belongs to $U$.
\item $\mathsf{T}_1$ (Fr\'echet) if given any $x\neq y$, there exists 
open sets $U,V$ such that $x\in U$, $y\in V$ and $y\not\in U$, $x\not\in V$. 
\item $\mathsf{T}_2$ (Hausdorff) if given any $x\neq y$, there
exists open sets $U,V$ such that $x\in U$, $y\in V$ and $U\cap V =
\emptyset$. 
\end{itemize}
\begin{cor}
Let $(\mathfrak{M},\mathfrak{F})$ be a regular direct system of smooth
manifolds. Assume the elements $M_i$ are all $\mathsf{T}_0$ (resp.\
$\mathsf{T}_1$ or
$\mathsf{T}_2$) as topological spaces. Then $\dirlim_{\text{Top}}\mathfrak{M}$
is $\mathsf{T}_0$ (resp.\ $\mathsf{T}_1$ or $\mathsf{T}_2$). 
\end{cor}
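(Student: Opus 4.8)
The plan is to reduce everything to the preceding Lemma by first bringing both points onto a common manifold $M_k$. Every point of $\dirlim_{\text{Top}}\mathfrak{M}$ is the class of some $x_i \in M_i$, hence of the form $f_{*i}(x_i)$, since the natural maps into the quotient $\coprod_i M_i/\sim$ are jointly surjective. So I write two distinct points as $f_{*i}(x_i)$ and $f_{*j}(y_j)$. Because $(I,\prec)$ is directed, there is some $k\succ i,j$; using the coherence $f_{*i} = f_{*k}\circ f_{ki}$ and $f_{*j} = f_{*k}\circ f_{kj}$, I replace $x_i$ by $x_k := f_{ki}(x_i)$ and $y_j$ by $y_k := f_{kj}(y_j)$, so that both points are images under the \emph{same} map $f_{*k}$ of points $x_k, y_k\in M_k$. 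Since $f_{*k}$ is injective (Section \ref{sec:directunion}) and the two image points are distinct, $x_k \neq y_k$.

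Next I apply the assumed separation axiom inside $M_k$ and transport the resulting open sets by the Lemma. In the $\mathsf{T}_0$ case, I choose an open $U_k\subseteq M_k$ containing exactly one of $x_k, y_k$, say $x_k\in U_k$, $y_k\notin U_k$; the Lemma then gives that $f_{*k}(U_k)$ is open and separates the two image points in the $\mathsf{T}_0$ sense. In the $\mathsf{T}_1$ case, I pick open $U_k\ni x_k$ with $y_k\notin U_k$ and open $V_k\ni y_k$ with $x_k\notin V_k$; applying the Lemma to each of $U_k$ and $V_k$ yields open sets $f_{*k}(U_k), f_{*k}(V_k)$ with the required membership and non-membership properties.

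The $\mathsf{T}_2$ case is the one needing a small extra observation. Choosing disjoint open $U_k\ni x_k$ and $V_k\ni y_k$ in $M_k$, the Lemma guarantees that $f_{*k}(U_k)$ and $f_{*k}(V_k)$ are open and contain the respective image points; to finish I must verify they remain \emph{disjoint}. This is exactly where injectivity of $f_{*k}$ enters: for an injective map the image commutes with intersections, so
\[ f_{*k}(U_k)\cap f_{*k}(V_k) = f_{*k}(U_k\cap V_k) = f_{*k}(\emptyset) = \emptyset~. \]
Thus the images furnish disjoint open neighborhoods and the direct limit is Hausdorff.

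I expect no genuine obstacle here, as the substance has been front-loaded into the Lemma (openness together with the membership conditions) and into the injectivity of $f_{*k}$. The only points demanding care are the reduction to a common index $k$—which rests on directedness of $(I,\prec)$ combined with the coherence $f_{*i}=f_{*k}\circ f_{ki}$—and, in the Hausdorff case, recalling that preservation of disjointness is not asserted by the Lemma but follows from injectivity. It is worth noting that none of these steps invokes any form of choice, in keeping with the aim of the paper.
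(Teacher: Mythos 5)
Your proof is correct and takes essentially the approach the paper intends: the paper offers no explicit proof, presenting the corollary as an immediate consequence of the preceding Lemma, and your argument (reduce two points to a common index $k$ via directedness and the coherence $f_{*i}=f_{*k}\circ f_{ki}$, apply the separation axiom in $M_k$, and push the open sets forward using the Lemma) is exactly the intended filling-in of that gap. Your extra observation that injectivity of $f_{*k}$ preserves disjointness in the $\mathsf{T}_2$ case is a detail the paper leaves implicit, and you handle it correctly.
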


Countability, on the other hand, is problematic. 
\begin{exa}
Let $\mathfrak{M}$ consists of \emph{finite} subsets of $(0,1)$,
equipped with the discrete topology; the elements are trivially
$0$-dimensional smooth manifolds. By definition each $M_i$ is second
countable, separable, and Lindel\"of. But the (direct) union $\dirlim
\mathfrak{M}$, which is again $(0,1)$ with the discrete topology, is
none of the three. 
\end{exa}
One may be tempted into thinking that the issue can be fixed by
working with spaces that are connected. But then one runs into the
problem with the long line (which can be defined as a direct limit of
uncountably many copies of $\Real$). 

We return to the resolution of this problem when we discuss physical
applications in Section \ref{sec:app}.

\section{A note on the inverse limit}
Just as the direct limit generalises the notion of \emph{unions} of
sets, we can use the notion of the \emph{inverse limit} to generalise
the notion of \emph{intersections}. More precisely, given a direct
system of smooth manifolds $(\mathfrak{M},\mathfrak{F})$, we consider the
set
\[ \invlim_{\text{Set}} \mathfrak{M} = \set{\vec{x} \in \prod M_i}{x_j
= f_{ji}(x_i) \text{ whenever } i\prec j}~.\]
A priori the inverse limit can be empty, since the intersection of an
arbitrary family of sets can be the empty set. But moreover, to even
assert that $\prod M_i$ is non-empty in general is precisely the
\emph{axiom of choice}. 

Ignoring this problem with choice, we also see that in many regards
the inverse limit does not behave as nicely as the direct limit
(union). For example, the inverse limit of a directed system of open sets
is not necessarily open: consider $\mathfrak{M} =
\set{(-q,q)\subset\Real }{q\in \Rational\cap (-1,1)}$ ordered by
inclusion, with $\mathfrak{F}$ the inclusion maps. Their direct union
is their union which is $(-1,1)$, but their inverse limit is the
single point $\{0\}$. Given a regular direct system of smooth
manifolds, supposing that the inverse limit exists, the most we can
say is that the projection map $f_{i*}: \invlim_{\text{Set}}
\mathfrak{M}\to M_i$ is injective. The projection maps are not
guaranteed to be open, nor can we easily define a smooth structure. 

Luckily, in the context of the evolution problem in physics, we need
not consider the inverse limit, as the appropriate object is already
given to us \emph{as the initial data}. 

\section{Application to physics}\label{sec:app}
In physics, as motivated in the introduction, we want to consider the
initial value problem to some systems of equations. The simplest
initial value problem is that of an ordinary differential equation
(ODE).
Here we immediately see that the issue of the long line cannot arise.
This is due to the demand that we have an increasing time function on
our ``solution manifold'', which prevents the interval on which the
solution exists from getting too long. We can codify this intuition by
requiring that there be a well-defined time-function for test
particles. 

\begin{defn}\label{defn:physical}
A regular directed system of smooth manifolds
$(\mathfrak{M},\mathfrak{F})$ is said to be \emph{physical} if there
exists a smooth manifold $\Sigma$ and a family $\mathfrak{H} =
\{h_i\}_{i\in I}$ of injective continuous open maps $h_i: M_i \to \Sigma\times \Real$
such that $h_j\circ f_{ji} = h_i$. 
\end{defn}
As an immediate consequence of the definition, there exists an
injective continuous open map $h_*: \dirlim \mathfrak{M} \to
\Sigma\times\Real$, which implies that $h_*$ is an homeomorphism onto
its images. Therefore if $\Sigma$ is second-countable, we will also
have that $\dirlim\mathfrak{M}$ is second countable. 

The definition above captures crucially a notion of \emph{global
hyperbolicity} of solutions to initial value problems. Recall that a 
consequence of
global hyperbolicity for smooth Lorentzian manifolds is that
\emph{every inextensible time-like geodesic must intersect the Cauchy
hypersurface exactly once}. This in particular means (see also
\citet[p.180]{Ringst2009} for a similar argument)
\begin{lem}\label{lem:count}
Let $(\tilde{M},g)$ be a globally hyperbolic Lorentzian manifold, and
let $\tilde{\Sigma}$ be a Cauchy hypersurface. Then $\tilde{\Sigma}$
being second-countable implies $\tilde{M}$ is second-countable. 
\end{lem}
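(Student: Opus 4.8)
The plan is to exploit the defining property of global hyperbolicity stated in the text: every inextensible timelike geodesic meets the Cauchy hypersurface $\tilde\Sigma$ exactly once. This gives a natural way to ``flow'' the manifold onto $\tilde\Sigma \times \Real$, reducing second countability of $\tilde M$ to that of $\tilde\Sigma$ via the homeomorphism-onto-image argument already used for \emph{physical} systems in Definition \ref{defn:physical}.

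First I would construct a continuous injective map $\Phi : \tilde M \to \tilde\Sigma \times \Real$. The idea is to pick a smooth time function $\tau : \tilde M \to \Real$ (which exists for globally hyperbolic manifolds, with $\tilde\Sigma = \tau^{-1}(0)$ a level set), and to define a retraction $r : \tilde M \to \tilde\Sigma$ by following the integral curves of a suitable timelike vector field (for instance the normalized gradient of $\tau$, which is timelike) back to their unique intersection with $\tilde\Sigma$. The map $\Phi(p) = (r(p), \tau(p))$ is then injective: if $\Phi(p) = \Phi(q)$ then $p$ and $q$ lie on the same integral curve and share the same value of $\tau$, forcing $p = q$ once one checks that $\tau$ is strictly monotone along the flow. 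Continuity of $\Phi$ follows from smooth dependence of the flow on initial conditions.

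Next I would upgrade $\Phi$ to a homeomorphism onto its image. Rather than verify openness of $\Phi$ directly, I would note that $\Phi$ is in fact a bijection onto $\tilde\Sigma \times \Real$ (surjectivity again coming from the ``exactly once'' property, which guarantees every integral curve is inextensible and sweeps out all of $\Real$ in the $\tau$-coordinate after reparametrization) with a continuous inverse given by flowing forward from $\tilde\Sigma$. Thus $\Phi$ is a homeomorphism $\tilde M \cong \tilde\Sigma \times \Real$. Since second countability is a topological invariant and is preserved under finite products, $\tilde\Sigma$ second countable implies $\tilde\Sigma \times \Real$ second countable, hence $\tilde M$ is second countable.

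The main obstacle I expect is making the retraction $r$ and the reparametrization of the flow genuinely global and smooth without tacitly invoking choice or compactness arguments, since the whole point of the paper is to avoid such crutches. In particular one must ensure the timelike vector field is complete (so its flow is defined for all time) and that the identification $\tilde M \cong \tilde\Sigma \times \Real$ can be produced by an explicit, choice-free construction; the cleanest route is to invoke a smooth global time function together with the standard splitting theorem for globally hyperbolic spacetimes, and then observe that the resulting diffeomorphism is manifestly constructive. The subtlety is purely in verifying that the monotonicity and inextensibility claims hold uniformly, which is where the stated consequence of global hyperbolicity does the essential work.
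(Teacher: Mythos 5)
Your overall strategy --- flow $\tilde M$ onto $\tilde\Sigma\times\Real$ and pull second countability back through a homeomorphism onto its image --- is indeed the strategy of the paper's proof. But the way you launch the flow contains a genuine gap, and it is precisely the gap the paper's proof is engineered to avoid. You begin by ``picking a smooth time function $\tau$'' and invoking ``the standard splitting theorem for globally hyperbolic spacetimes'' (\citet{Geroch1970}, \citet{BerSan2003}). Those theorems are proved in the literature for \emph{spacetimes}, which by standing convention are already assumed second countable (equivalently, for connected Hausdorff manifolds, paracompact). Geroch's construction of the time function, for instance, requires a finite measure that is positive on every nonempty open set, and no such measure can exist on a non-second-countable connected manifold: it would have to assign positive mass to uncountably many pairwise disjoint open sets. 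So invoking the splitting theorem to \emph{prove} second countability is circular --- the hypothesis needed to obtain $\tau$ is the conclusion of the lemma. Your closing paragraph flags completeness of the vector field and smoothness of the reparametrization as the main obstacles, but those are side issues; the unfixable step is the very first one. A related defect: even producing a globally defined timelike vector field to flow along, absent a time function, amounts to making a coherent choice of timelike direction at every point of $\tilde M$, which is exactly the kind of unjustified selection the paper is trying to eliminate.

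The paper's proof dodges both problems by moving one level up: it works on $M=\{(p,v)\in T\tilde M \;:\; g(v,v)=-1\}$, the bundle of unit timelike vectors, and uses the \emph{geodesic flow}, which is canonically determined by the metric with no choices and no time function. Global hyperbolicity (every inextensible timelike geodesic meets $\tilde\Sigma$ exactly once) yields a map $M \to T\tilde\Sigma\times\Real$ sending $(p,v)$ to the point where its geodesic crosses $\tilde\Sigma$, the orthogonal projection of the velocity there, and the parameter time of the crossing; wellposedness of the geodesic ODE makes this map continuous and injective, and reversing the flow shows it is open. Hence $M$ is homeomorphic to an open subset of $T\tilde\Sigma\times\Real$, so $M$ is second countable, and pushing a countable basis down the open bundle projection $M\to\tilde M$ finishes the proof. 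Note that this argument needs only a homeomorphism onto an \emph{open subset}, never the global identification $\tilde M\cong\tilde\Sigma\times\Real$ that your argument tries to establish. To salvage your route you would have to construct the time function by hand under the stated hypotheses (no second countability, no choice), which is essentially harder than the lemma itself.
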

\begin{proof}
Define $M = \set{ x = (p,v)\in T\tilde{M}}{g(v,v) = -1}$ and $\Sigma =
T\tilde{\Sigma}$. If $\tilde{\Sigma}$ is second countable, so is
$\Sigma$. By global hyperbolicity, corresponding to each $x\in M$
there is exactly one $y = (\sigma,w)\in \Sigma$ and $s\in\Real$ 
such that the geodesic $\gamma$ in $\tilde{M}$ with initial value $x$
has $\gamma(s) = \sigma$ and $\dot{\gamma}(s)$ projects orthogonally
to $w$. The mapping $x\mapsto y$ is continuous and injective by the
wellposedness theory of ODEs. Reversing the flow we also have that the
mapping is open. Thus $M$ is homeomorphic to an open subset of $\Sigma
\times \Real$. This implies that $M$ is second countable, and since
the bundle projection $M\to \tilde{M}$ is open, so is $\tilde{M}$. 
\end{proof}
\begin{rmk}
The usual proof that $\tilde{\Sigma}$ is second countable implies
$\Sigma$ is second countable uses the statement ``countable union of
countable sets is countable'', which is a weak form of choice. If we
assume however the initial data is specified such that $\Sigma =
T\tilde{\Sigma}$ is a second countable manifold, this deficiency can be
circumvented. Note that the proof that finite Cartesian products of second
countable manifolds is second countable follows from the fact that
Cartesian products of countable sets are countable, which does
\emph{not} require any form of choice by the Cantor argument. 
\end{rmk}

In general, one can think of Definition \ref{defn:physical} as
requiring there be 
\begin{enumerate}
\item a well defined evolution for ``test particles'';
\item and a notion of hyperbolicity which states that every test
particle in the space-times can be traced back to one that arose from
some initial data. 
\end{enumerate}
The $\Sigma$ factor of the mapping $h_i$ gives the initial
configuration in phase space of the test particle, and the $\Real$
factor gives the elapsed (proper) time.

To apply the general machinery we have developed above to an \emph{initial
value problem}, we require that the solutions to the initial value
problem satisfy certain nice properties. Below is the general prescription:

We represent the initial data of the problem by some fibred manifold
$\Sigma$. By a set of solutions  we refer to a set $\mathfrak{M}$ of
fibred manifolds such that for each $M\in \mathfrak{M}$, there is an
embedding $\phi: \Sigma \to M$. 
\begin{rmk}
We have to be very careful
here when speaking of the ``set'' of solutions. Recall that Einstein's
equation is diffeomorphism invariant. Now given a set of diffeomorphic
manifolds $\mathfrak{M}$, we can apply the direct union construction to get a new
manifold $\dirlim\mathfrak{M}$ which is also diffeomorphic to
any element of $\mathfrak{M}$. By the construction of
$\dirlim_{\text{Set}}\mathfrak{M}$ as an equivalence class,
we must have that $\dirlim\mathfrak{M}\not\in \mathfrak{M}$. To put it in
another way, by the axiom of regularity it does not make sense to
speak of ``the set of all manifolds diffeomorphic to $M$'', and so
it also does not make sense to speak of the set of all manifolds
solving Einstein's equations with a given initial value. It is for
this reason that in the \emph{locality} part of Definition
\ref{defn:ivp} below we cannot directly require that $M \in
\mathfrak{M}$. 
\end{rmk}
Given $M,M'\in \mathfrak{M}$, and
$\phi,\phi'$ their corresponding embeddings of $\Sigma$, we say
that $M$ is an \emph{extension} of $M'$ if there exists an open
embedding $f: M' \to M$ such that $f\circ \phi' = \phi$. Denote by
$\mathfrak{F}$ a set of extension maps.
\begin{defn}\label{defn:ivp}
Given the pair $(\mathfrak{M},\mathfrak{F})$ of a set of solutions to
the initial value problem and a set of extension maps, we say that it
satisfies
\begin{itemize}
\item \emph{existence} if $\mathfrak{M}$ is non-empty;
\item \emph{unique extension} if $f,f'\in \mathfrak{F}$ are both
extension maps sending $M' \to M$, then $f = f'$;
\item \emph{locality} if ``unions of solutions is a solution''; by
this we mean if $M$ is a smooth fibred manifold such that there exists open
embeddings $f_i: M_i \to M$ where $M_i\in \mathfrak{M}$, such that
\begin{enumerate}
\item $M = \cup_i f_i(M_i)$
\item $f_i$ commute with the embedding
of the initial data $\Sigma$ 
\item if for some $i,j$ there exists $M'\in \mathfrak{M}$ with $f'_i: M' \to M_i$ and
$f'_j: M' \to M_j$, then $f_i \circ f'_i = f_j\circ f'_j$ (preserves
unique extension) 
\end{enumerate} 
then there exists $N\in \mathfrak{M}$
and a diffeomorphism $g: M\to N$ such that $g\circ f_i \in
\mathfrak{F}$ is the extension map from $M_i$ to $N$.
\item \emph{uniqueness} if $M,M'\in \mathfrak{M}$, there exists $N\in
\mathfrak{M}$ such that $M,M'$ are both extensions of $N$;
\end{itemize}
\end{defn}
\begin{rmk}\label{rmk:properties}
For Einstein's equations, existence here is the same as \citet[Theorem 1]{ChoGer1969}, and
uniqueness is the same as \citet[Theorem 2]{ChoGer1969}. The locality
property holds for any initial value problem expressible as finite
order partial differential equations (and in particular Einstein's
equations); but additional properties like Hausdorff separation may
need checking. The unique extension property is implicit in
\citet{ChoGer1969}, even though it is not explicitly mentioned. 
\end{rmk}

\begin{rmk}\label{rmk:setsoln}
The author should point out that, for Einstein's
equation, the fact that there exists a set (and not a proper class) $\mathfrak{M}$
such that for every globally hyperbolic Cauchy development $N$ there
exists an elements $M\in \mathfrak{M}$ such that $N$ is diffeomorphic
to $M$, depends strongly on the requirements that the solutions are 
Hausdorff and second countable. Dropping these requirements, given 
a local solution $M$ of
Einstein's equation with the given data, for every ordinal
$\lambda$, we can form the Cartesian product $\lambda \times M$. This
shows that in general the collection of \emph{all} solutions of
Einstein's equation must form a proper class. (This example is due to 
Tobias Fritz \citeyearpar{Fritz2013}.) This
implicit assumption where one is working with a set of solutions and
not a larger collection is not justified by \citet{ChoGer1969}, and is used when it is asserted that
there is a poset structure on the collection of solutions. Here we
give a sketch (this proof is inspired by some comments of Omar
Antol\'\i{}n-Camarena and Igor Belegradek on MathOverflow; see URL
given in \cite{Fritz2013}):

Assuming solutions are Hausdorff and second countable and connected (a
consequence of global hyperbolicity), a solution is representable by
sections of some finite dimensional fibred manifold. Appealing to the
strong Whitney embedding theorem (the author makes no claim on its
dependence on some version of axiom of choice) the solution is diffeomorphic to some
smooth submanifold of $\Real^K$ for some $K$ sufficiently large.
Therefore there exists a subset of the power set
$\mathcal{P}(\Real^K)$ which contains the diffeomorphic image of any 
solution. This implies that we have an appropriate ``set of
solutions''. 

We further remark that this point is also explicitly considered by
\citet{Ringst2009}, who gave a different proof then the sketch below,
making use of the Geroch splitting theorem (see \citet[Property 7,
p.444]{Geroch1970} and \citet{BerSan2003}); a similar approach is taken
by \citet{Sbi2013}.
\end{rmk}

\begin{thm}
Given the set $\mathfrak{M}$ and $\mathfrak{F}$ corresponding to an
initial value problem that satisfies existence, uniqueness, unique
extension, and locality, then there exists a \emph{maximal} solution;
that is to say, there exists $\bar{M}\in \mathfrak{M}$ such that for
any $M\in \mathfrak{M}$ there exists an open embedding $f\in
\mathfrak{F}$ sending $M\to \bar{M}$. 
\end{thm}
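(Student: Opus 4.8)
The plan is to realise the maximal solution as the direct union of the \emph{entire} family $\mathfrak{M}$, thereby replacing the usual Zorn's Lemma argument (which extracts a maximal chain and takes its union) by the choice-free construction of Section~\ref{sec:directunion}. First I would turn $\mathfrak{M}$ into the index set of a direct system: declare $M\prec M'$ whenever $M'$ is an extension of $M$, and for $M\prec M'$ let $f_{M'M}\colon M\to M'$ be the corresponding extension map. The \emph{unique extension} hypothesis is exactly what makes this assignment single-valued, and it forces the cocycle identity $f_{M''M}=f_{M''M'}\circ f_{M'M}$ (both sides are open embeddings $M\to M''$ commuting with the data-embedding, hence equal by unique extension). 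Since extension maps are open embeddings, the resulting system is \emph{regular} in the sense of Section~\ref{sec:directunion}, \emph{provided} $(\mathfrak{M},\prec)$ is directed.

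Establishing directedness is the step I expect to be the main obstacle. Given $M,M'\in\mathfrak{M}$, the \emph{uniqueness} hypothesis furnishes a common sub-solution $N$, i.e.\ extension maps $N\to M$ and $N\to M'$; but this is a \emph{lower} bound, whereas a direct system needs \emph{upper} bounds. To produce a common extension I would amalgamate $M$ and $M'$ along $N$ (the pushout $M\sqcup_N M'$, a smooth---possibly non-Hausdorff---manifold into which $M$ and $M'$ embed openly) and invoke \emph{locality} to replace this glued manifold by an honest element of $\mathfrak{M}$ extending both. The delicate point is the compatibility hypothesis (3) of locality: for the pushout it holds only when $N$ absorbs \emph{every} common sub-solution of $M$ and $M'$, so one is forced to glue along a \emph{largest} common sub-solution. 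Constructing that maximal overlap---again via uniqueness to generate common sub-solutions and locality to amalgamate them, together with a coherence check that the resulting identifications are transitive (which in turn rests on unique extension)---is where the real work sits; in the concrete general-relativistic setting this coherence is the statement that two developments agree canonically on their common domain of dependence.

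Once directedness is in hand, $(\mathfrak{M},\mathfrak{F})$ is a regular direct system and I would form $\bar{M}_0\eqdef\dirlim\mathfrak{M}$, with the canonical maps $f_{*M}\colon M\to\bar{M}_0$, which are open and diffeomorphic onto their image by the construction of Section~\ref{sec:directunion}. The final step, which is the genuine replacement for Zorn's Lemma, is to apply \emph{locality} one last time to the family $\{f_{*M}\}_{M\in\mathfrak{M}}$ with glued manifold $\bar{M}_0$. Here the hypotheses of locality are clean rather than delicate: the images $f_{*M}(M)$ cover $\bar{M}_0$; the maps $f_{*M}$ share a common data-embedding $\bar\phi=f_{*M}\circ\phi_M$ (well-defined by the relation $f_{*M}=f_{*M'}\circ f_{M'M}$ and directedness); and compatibility (3) is automatic because, for any common sub-solution $R$ of $M,M'$, both $f_{*M}\circ f_{MR}$ and $f_{*M'}\circ f_{M'R}$ equal $f_{*R}$ by the cocycle identity. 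Locality then yields $\bar{M}\in\mathfrak{M}$ and a diffeomorphism $g\colon\bar{M}_0\to\bar{M}$ with $g\circ f_{*M}\in\mathfrak{F}$ the extension map $M\to\bar{M}$ for every $M$. Since every $M\in\mathfrak{M}$ thus admits an extension map into $\bar{M}$, the manifold $\bar{M}$ is the desired maximal solution; and as the entire argument uses only the direct union construction together with the stated hypotheses, no form of the axiom of choice---not even countable choice, nor Zorn's Lemma---enters.
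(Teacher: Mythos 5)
Your proposal is correct and follows essentially the same route as the paper's proof: construct the maximal overlap $M\wedge M'$, glue $M$ and $M'$ along it (locality's condition (3) holding precisely because the overlap is maximal) to obtain pairwise upper bounds, and then apply locality once more to the direct union of the now-directed family $\mathfrak{M}$. The overlap construction you defer as ``the real work'' is carried out in the paper exactly along the lines you sketch: the union $\cup_{N\in C^{-}(M,M')} f_N(N)$ of the images of all common sub-solutions inside $M$ is realized as an element of $\mathfrak{M}$ via locality (unique extension supplying the coherence), and Proposition \ref{prop:limitmap} transfers it diffeomorphically onto the corresponding union inside $M'$, exhibiting it as a common sub-solution extending every other one.
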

\begin{proof}
We divide into several steps. 

\textbf{Construction of overlaps}. Given $M, M'\in \mathfrak{M}$, the
uniqueness property guarantees that the set 
\begin{multline*} C^{-}(M,M') \eqdef \{N \in
\mathfrak{M} \,|\, \exists f_N,f'_N\in \mathfrak{F}, \\f_N:N\to M, f'_N:N\to
M'\} \end{multline*}
is non-empty. Let 
\[ \bar{N} = \cup_{N\in C^-(M,M')} f_N(N) \]
and
\[ \bar{N}' = \cup_{N\in C^-(M,M')} f'_N(N)~.\]
Since $f_N$ and $f'_N$ are in $\mathfrak{F}$, unique extension is
verified and so by locality we have that there exists 
$\tilde{N}, \tilde{N}'\in
\mathfrak{M}$ diffeomorphic to $\bar{N}$ and $\bar{N}'$. Now, using
that $f'_N \circ f_N^{-1}|_{f_N(N)}$ is a diffeomorphism onto its
image, by Proposition \ref{prop:limitmap} there exists a
diffeomorphism $\tilde{f}: \tilde{N} \to \tilde{N}'$, and hence
$\tilde{N},\tilde{N}'\in C^-(M,M')$. In other words, there exists some
maximal element of $C^-(M,M')$ in the sense that it is an extension of 
any other element of $C^-(M,M')$. We denote the solution $\tilde{N}$
constructed above by $M\wedge M'$ and call it the \emph{maximal
overlap}. 

\textbf{Construction of pairwise extensions}. Now given $M,M'\in
\mathfrak{M}$, let $f,f'$ be the extension map from $M\wedge M'$ to
$M$ and $M'$ respectively. Let us consider $N = M\coprod M' / \sim$
where the equivalence relation is that $x \sim x'$ iff there exists
$y\in M\wedge M'$ such that $f(y) = x$ and $f'(y) = x'$. By the
maximality of $M\wedge M'$ we have that unique extension into $N$ is
preserved, so by locality there exists a solution $M\vee M'$
diffeomorphic to $N$.  

\textbf{Building the maximal element}. By the previous steps, we have
that the set $\mathfrak{M}$ in fact forms a lattice with the partial
ordering induced by the extension maps. In particular, it is in itself
a directed set. Hence we can take the direct union $\dirlim
\mathfrak{M}$. The direct union construction guarantees that unique
extension is preserved. Therefore we can once again appeal to locality
to conclude that there must exists some element $\bar{M}$ of 
$\mathfrak{M}$ that is diffeomorphic to $\dirlim \mathfrak{M}$. 
\end{proof}

\begin{cor}
For Einstein's equations in general relativity, there exists a
Hausdorff, second countable, maximal globally hyperbolic Cauchy
development. 
\end{cor}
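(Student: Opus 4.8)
The plan is to verify that Einstein's equations furnish a pair $(\mathfrak{M},\mathfrak{F})$ satisfying the four hypotheses of the preceding Theorem, and then to check that the maximal solution $\bar{M}$ it produces is Hausdorff and second countable. First I would fix the initial data $\Sigma$ (a suitable fibred manifold encoding the Riemannian metric, second fundamental form, and any matter fields on a spacelike slice), take $\mathfrak{M}$ to be a set of globally hyperbolic Cauchy developments of this data, and let $\mathfrak{F}$ be the extension maps in the sense of Definition \ref{defn:ivp}. The very first point to secure is that $\mathfrak{M}$ can be chosen to be an honest \emph{set} rather than a proper class; this is exactly the content of Remark \ref{rmk:setsoln}, which uses the Whitney embedding theorem to realise every Hausdorff, second countable, connected solution as a submanifold of some $\Real^K$, and hence to collect the diffeomorphism types into a subset of $\mathcal{P}(\Real^K)$.

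Next I would discharge the four properties, following Remark \ref{rmk:properties}. Existence is the local existence theorem of \citet[Theorem 1]{ChoGer1969}, and uniqueness, in the sense that any two developments admit a common sub-development, is their \citet[Theorem 2]{ChoGer1969}. The unique extension property---that any two extension maps $M'\to M$ coincide---is implicit in the geometric uniqueness of developments. Locality, the assertion that a suitably coherent union of developments is again a development, holds because Einstein's equations are expressible as a finite order system of partial differential equations: the glued manifold carries a metric still solving the equations, and one only needs to confirm that the result is a legitimate member of $\mathfrak{M}$.

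Having applied the Theorem to obtain $\bar{M}$, I would then establish the two topological conclusions. For second countability I would use that global hyperbolicity makes the system \emph{physical} in the sense of Definition \ref{defn:physical}: each development admits the injective continuous open map into $\Sigma\times\Real$ recording the phase-space datum and elapsed proper time of a test particle (concretely the construction of Lemma \ref{lem:count}, applied with $\Sigma = T\tilde{\Sigma}$). Hence $\dirlim\mathfrak{M}$ embeds as an open subset of $\Sigma\times\Real$, and since the initial data is assumed second countable, so is $\bar{M}$. Hausdorffness is inherited from the elements $M_i$ via the separation Corollary, which shows that $\mathsf{T}_2$ passes to the direct limit.

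The hard part is the locality property---and in particular the clause hidden in Remark \ref{rmk:properties} that ``additional properties like Hausdorff separation may need checking.'' When one glues two developments along their maximal overlap $M\wedge M'$, the naive quotient can fail to be Hausdorff (branching at the boundary of the overlap), and it is precisely the verification that globally hyperbolic developments glue to a \emph{Hausdorff} globally hyperbolic development that carries the genuine geometric content. In keeping with the black-box philosophy of this paper I would import this as part of the local existence-and-uniqueness theory (the heart of the original \citet{ChoGer1969} argument) rather than re-derive it; the remaining steps are then formal consequences of the direct-union machinery of Sections \ref{sec:directunion} and \ref{sec:app}.
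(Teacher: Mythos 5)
Your proposal follows essentially the same route as the paper's own sketch: establish set-hood of $\mathfrak{M}$ via Remark \ref{rmk:setsoln}, discharge existence, uniqueness, and unique extension as the classical results of \citet{ChoGer1969} per Remark \ref{rmk:properties}, isolate the Hausdorffness of pairwise gluings as the one genuinely hard part of \emph{locality} and import it as a black box, and obtain second countability from Lemma \ref{lem:count}. Your additional observation that the separation Corollary handles Hausdorffness of the final directed union (once pairwise extensions exist) is a correct detail the paper's sketch leaves implicit, but it does not change the structure of the argument.
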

\begin{proof}[Sketch of proof]
As indicated in Remark \ref{rmk:setsoln}, assuming our initial data
$\Sigma$ is Hausdorff and second countable, the set of ``Hausdorff,
second countable, globally hyperbolic Cauchy developments'' forms a
set $\mathfrak{M}$. And as indicated in Remark \ref{rmk:properties},
that this set is nonempty (\emph{existence}), and two developments must
overlap (\emph{uniqueness}) is classical. As part of the uniqueness
statement it is implicit that that \emph{unique extension} also holds
for the Einstein system. It remains to show that \emph{locality}
holds. The main difficult step occurs here, in showing that pairwise
extensions remain Hausdorff. The proof occupies most of section 3.2 in
Sbierski's pre-print (while being briefly sketched in the original
paper of Choquet-Bruhat and Geroch), and we omit it here. For second
countability a simple application of Lemma \ref{lem:count} suffices. 
\end{proof}

\section{Acknowledgements}
The author wishes to thank Asaf Karagila for his enlightening
explanations of aspects of set theory, Andrew Stacey for clarifying
some subtleties in global analysis, and Mihalis Dafermos and Jan
Sbierski for comments on the manuscript. 
The author especially wants to acknowledge Jan Sbierski for
making available a copy of his delightful manuscript prior to its
appearance on \url{http://arXiv.org}.

The author is supported by the Swiss National Science
Foundation. He would like to also thank the Mathematical Sciences
Research Institute for their hospitality during his stay as a part of
the 2013 program in mathematical general relativity; the research
leading to this manuscript was initiated at the MSRI.


\begin{thebibliography}{12}%
\makeatletter
\providecommand \@ifxundefined [1]{%
 \@ifx{#1\undefined}
}%
\providecommand \@ifnum [1]{%
 \ifnum #1\expandafter \@firstoftwo
 \else \expandafter \@secondoftwo
 \fi
}%
\providecommand \@ifx [1]{%
 \ifx #1\expandafter \@firstoftwo
 \else \expandafter \@secondoftwo
 \fi
}%
\providecommand \natexlab [1]{#1}%
\providecommand \enquote  [1]{``#1''}%
\providecommand \bibnamefont  [1]{#1}%
\providecommand \bibfnamefont [1]{#1}%
\providecommand \citenamefont [1]{#1}%
\providecommand \href@noop [0]{\@secondoftwo}%
\providecommand \href [0]{\begingroup \@sanitize@url \@href}%
\providecommand \@href[1]{\@@startlink{#1}\@@href}%
\providecommand \@@href[1]{\endgroup#1\@@endlink}%
\providecommand \@sanitize@url [0]{\catcode `\\12\catcode `\$12\catcode
  `\&12\catcode `\#12\catcode `\^12\catcode `\_12\catcode `\%12\relax}%
\providecommand \@@startlink[1]{}%
\providecommand \@@endlink[0]{}%
\providecommand \url  [0]{\begingroup\@sanitize@url \@url }%
\providecommand \@url [1]{\endgroup\@href {#1}{\urlprefix }}%
\providecommand \urlprefix  [0]{URL }%
\providecommand \Eprint [0]{\href }%
\providecommand \doibase [0]{http://dx.doi.org/}%
\providecommand \selectlanguage [0]{\@gobble}%
\providecommand \bibinfo  [0]{\@secondoftwo}%
\providecommand \bibfield  [0]{\@secondoftwo}%
\providecommand \translation [1]{[#1]}%
\providecommand \BibitemOpen [0]{}%
\providecommand \bibitemStop [0]{}%
\providecommand \bibitemNoStop [0]{.\EOS\space}%
\providecommand \EOS [0]{\spacefactor3000\relax}%
\providecommand \BibitemShut  [1]{\csname bibitem#1\endcsname}%
\let\auto@bib@innerbib\@empty
\bibitem [{\citenamefont {Bernal}\ and\ \citenamefont
  {S{\'a}nchez}(2003)}]{BerSan2003}%
  \BibitemOpen
  \bibfield  {author} {\bibinfo {author} {\bibnamefont {Bernal}, \bibfnamefont
  {A.~N.}}\ and\ \bibinfo {author} {\bibnamefont {S{\'a}nchez}, \bibfnamefont
  {M.}},\ }\bibfield  {title} {\enquote {\bibinfo {title} {On smooth {C}auchy
  hypersurfaces and {G}eroch's splitting theorem},}\ }\href {\doibase
  10.1007/s00220-003-0982-6} {\bibfield  {journal} {\bibinfo  {journal} {Comm.
  Math. Phys.}\ }\textbf {\bibinfo {volume} {243}},\ \bibinfo {pages}
  {461--470} (\bibinfo {year} {2003})}\BibitemShut {NoStop}%
\bibitem [{\citenamefont {Choquet-Bruhat}\ and\ \citenamefont
  {Geroch}(1969)}]{ChoGer1969}%
  \BibitemOpen
  \bibfield  {author} {\bibinfo {author} {\bibnamefont {Choquet-Bruhat},
  \bibfnamefont {Y.}}\ and\ \bibinfo {author} {\bibnamefont {Geroch},
  \bibfnamefont {R.}},\ }\bibfield  {title} {\enquote {\bibinfo {title} {Global
  aspects of the {C}auchy problem in general relativity},}\ }\href {\doibase
  10.1007/BF01645389} {\bibfield  {journal} {\bibinfo  {journal} {Comm. Math.
  Phys.}\ }\textbf {\bibinfo {volume} {14}},\ \bibinfo {pages} {329 -- 335}
  (\bibinfo {year} {1969})}\BibitemShut {NoStop}%
\bibitem [{\citenamefont {Chru\'sciel}(2011)}]{ChruscP2011}%
  \BibitemOpen
  \bibfield  {author} {\bibinfo {author} {\bibnamefont {Chru\'sciel},
  \bibfnamefont {P.~T.}},\ }\bibfield  {title} {\enquote {\bibinfo {title} {On
  maximal globally hyperbolic vacuum space-times},}\ }\href@noop {} {\
  (\bibinfo {year} {2011})},\ \bibinfo {note} {pre-print, available
  \url{http://arxiv.org/abs/1112.5779}}\BibitemShut {NoStop}%
\bibitem [{\citenamefont {Fritz}(2013)}]{Fritz2013}%
  \BibitemOpen
  \bibfield  {author} {\bibinfo {author} {\bibnamefont {Fritz}, \bibfnamefont
  {T.}},\ }\href {http://mathoverflow.net/q/143950} {\enquote {\bibinfo {title}
  {Is the class of n-dimensional manifolds essentially small?}}\ }\bibinfo
  {howpublished} {MathOverflow},\ \bibinfo {note}
  {\url{http://mathoverflow.net/q/143950} (version:
  2013-10-04)} \BibitemShut {NoStop}%
\bibitem [{\citenamefont {Geroch}(1968)}]{Geroch1968}%
  \BibitemOpen
  \bibfield  {author} {\bibinfo {author} {\bibnamefont {Geroch}, \bibfnamefont
  {R.}},\ }\bibfield  {title} {\enquote {\bibinfo {title} {Spinor structure of
  space-times in general relativity. {I}},}\ }\href@noop {} {\bibfield
  {journal} {\bibinfo  {journal} {J. Mathematical Phys.}\ }\textbf {\bibinfo
  {volume} {9}},\ \bibinfo {pages} {1739--1744} (\bibinfo {year}
  {1968})}\BibitemShut {NoStop}%
\bibitem [{\citenamefont {Geroch}(1970)}]{Geroch1970}%
  \BibitemOpen
  \bibfield  {author} {\bibinfo {author} {\bibnamefont {Geroch}, \bibfnamefont
  {R.}},\ }\bibfield  {title} {\enquote {\bibinfo {title} {Domain of
  dependence},}\ }\href@noop {} {\bibfield  {journal} {\bibinfo  {journal} {J.
  Mathematical Phys.}\ }\textbf {\bibinfo {volume} {11}},\ \bibinfo {pages}
  {437--449} (\bibinfo {year} {1970})}\BibitemShut {NoStop}%
\bibitem [{\citenamefont {Howard}\ and\ \citenamefont
  {Rubin}(1998)}]{HowRub1998}%
  \BibitemOpen
  \bibfield  {author} {\bibinfo {author} {\bibnamefont {Howard}, \bibfnamefont
  {P.}}\ and\ \bibinfo {author} {\bibnamefont {Rubin}, \bibfnamefont {J.~E.}},\
  }\href {http://consequences.emich.edu/conseq.htm} {\emph {\bibinfo {title}
  {Consequences of the axiom of choice}}},\ \bibinfo {series} {Mathematical
  Surveys and Monographs}, Vol.~\bibinfo {volume} {59}\ (\bibinfo  {publisher}
  {American Mathematical Society},\ \bibinfo {address} {Providence, RI},\
  \bibinfo {year} {1998})\ pp.\ \bibinfo {pages} {viii+432},\ \bibinfo {note}
  {see also the Project URL:
  \url{http://consequences.emich.edu/conseq.htm}}\BibitemShut {NoStop}%
\bibitem [{\citenamefont {Kobayashi}\ and\ \citenamefont
  {Nomizu}(1996)}]{KobNom1996}%
  \BibitemOpen
  \bibfield  {author} {\bibinfo {author} {\bibnamefont {Kobayashi},
  \bibfnamefont {S.}}\ and\ \bibinfo {author} {\bibnamefont {Nomizu},
  \bibfnamefont {K.}},\ }\href@noop {} {\emph {\bibinfo {title} {Foundations of
  differential geometry. {V}ol. {I}}}},\ Wiley Classics Library\ (\bibinfo
  {publisher} {John Wiley \& Sons Inc.},\ \bibinfo {address} {New York},\
  \bibinfo {year} {1996})\ pp.\ \bibinfo {pages} {xii+329},\ \bibinfo {note}
  {reprint of the 1963 original, A Wiley-Interscience Publication}\BibitemShut
  {NoStop}%
\bibitem [{\citenamefont {Miranda}(1995)}]{Mirand1995}%
  \BibitemOpen
  \bibfield  {author} {\bibinfo {author} {\bibnamefont {Miranda}, \bibfnamefont
  {R.}},\ }\href@noop {} {\emph {\bibinfo {title} {Algebraic curves and
  {R}iemann surfaces}}},\ \bibinfo {series} {Graduate Studies in Mathematics},
  Vol.~\bibinfo {volume} {5}\ (\bibinfo  {publisher} {American Mathematical
  Society},\ \bibinfo {address} {Providence, RI},\ \bibinfo {year} {1995})\
  pp.\ \bibinfo {pages} {xxii+390}\BibitemShut {NoStop}%
\bibitem [{\citenamefont {Ringstr{\"o}m}(2009)}]{Ringst2009}%
  \BibitemOpen
  \bibfield  {author} {\bibinfo {author} {\bibnamefont {Ringstr{\"o}m},
  \bibfnamefont {H.}},\ }\href {\doibase 10.4171/053} {\emph {\bibinfo {title}
  {The {C}auchy problem in general relativity}}},\ ESI Lectures in Mathematics
  and Physics\ (\bibinfo  {publisher} {European Mathematical Society (EMS),
  Z\"urich},\ \bibinfo {year} {2009})\ pp.\ \bibinfo {pages} {xiv+294},\
  \bibinfo {note} {see also the erratum ``Existence of a maximal globally
  hyperbolic development'', available
  http://www.math.kth.se/~hansr/mghd.pdf}\BibitemShut {NoStop}%
\bibitem [{\citenamefont {Sbierski}(2013)}]{Sbi2013}%
  \BibitemOpen
  \bibfield  {author} {\bibinfo {author} {\bibnamefont {Sbierski},
  \bibfnamefont {J.}},\ }\bibfield  {title} {\enquote {\bibinfo {title} {On the
  existence of a maximal {C}auchy development for the {E}instein equations -- a
  dezornification},}\ }\href@noop {} {\  (\bibinfo {year} {2013})},\ \bibinfo
  {note} {pre-print, available
  \url{http://arxiv.org/abs/1309.7591}}\BibitemShut {NoStop}%
\bibitem [{\citenamefont {Schwartz}(2011)}]{Schwar2011}%
  \BibitemOpen
  \bibfield  {author} {\bibinfo {author} {\bibnamefont {Schwartz},
  \bibfnamefont {R.~E.}},\ }\href@noop {} {\emph {\bibinfo {title} {Mostly
  surfaces}}},\ \bibinfo {series} {Student Mathematical Library}, Vol.~\bibinfo
  {volume} {60}\ (\bibinfo  {publisher} {American Mathematical Society},\
  \bibinfo {address} {Providence, RI},\ \bibinfo {year} {2011})\ pp.\ \bibinfo
  {pages} {xiv+314}\BibitemShut {NoStop}%
\end{thebibliography}
\end{document}